\newtheorem{theorem}{Theorem}[section]
\newtheorem{lemma}[theorem]{\normalfont\scshape Lemma}
\newtheorem{definition}{\normalfont\scshape Definition}[section]
\newcommand{\psd}{\textit{psd}}
\DeclareMathOperator*{\argmin}{arg\,min}
\newcolumntype{Y}{>{\centering\arraybackslash}X}
\title{Feasible Implied Correlation Matrices from Factor Structures}
\author{{\hspace{1mm}Wolfgang Schadner}\thanks{\url{https://sites.google.com/view/wolfgang-schadner/}} \\
	Swiss Institute of Banking and Finance\\
	University of St.Gallen\\
	\texttt{wolfgang.schadner@unisg.ch} 
}
\begin{document}
	\maketitle
	
	\begin{abstract}
		Forward-looking correlations are of interest in different financial applications, including factor-based asset pricing, forecasting stock-price movements or pricing index options. With a focus on non-FX markets, this paper defines necessary conditions for option implied correlation matrices to be mathematically and economically feasible and argues that existing models are typically not capable of guaranteeing so. To overcome this difficulty, the problem is addressed from the underlying factor structure and introduces two approaches to solve it. Under the quantitative approach, the puzzle is reformulated into a nearest correlation matrix problem which can be used either as a stand-alone estimate or to re-establish positive-semi-definiteness of any other model's estimate. From an economic approach, it is discussed how expected correlations between stocks and risk factors (like CAPM, Fama-French) can be translated into a feasible implied correlation matrix. Empirical experiments are carried out on monthly option data of the S\&P 100 and S\&P 500 index (1996-2020).
	\end{abstract}
	
	\keywords{Implied Correlation \and Spectral Projected Gradient \and Factor Analysis  \and Nearest Correlation Matrix \and Inexact Restoration \and Positive Semi-Definite}
	
	\section{Introduction}
	Financial markets are forward-looking by nature, and correlation matrices are at the core of every market. Yet finding a forward-looking solution which is both mathematically and economically feasible remains a challenging task so far (excluding foreign exchange markets, where derivatives are written on currency-pairs; e.g. \cite{Walter2000}). The option implied correlation matrix as such represents the forward-looking dependence between securities (\cite{Skintzi2005, Linders2016}), it is a key requirement for pricing basket/index options (e.g. on the S\&P 500; \cite{Linders2016, Milevsky1998}), but also used for other purposes such as factor-based asset pricing (\cite{Harris2019, Schadner2020b}), forecasting \cite{Skintzi2005,	Driessen2009, Fink2017, Markopoulou2016}, trading strategies (\cite{Buss2012, Kempf2015, Markopoulou2016, Hardle2016}), risk management (\cite{Skintzi2005, Chang2011, Echaust2020}) or understanding market behavior (\cite{Dhaene2012, Just2020}). What makes it favorable in practice is the simultaneous information update with changes in the current market outlook.\footnote{For example, when a market crashes, option prices will react simultaneously, thus implied metrics also change immediately while for backward-looking information it will take hours/days/months to update for the sudden change.} Given the broad interest into implied correlations, there are several reasons why solving the cross-sectional matrix remains a challenging task. The main difficulty is that there are far less options traded as there are unknown correlation-pairs within the matrix, hence the puzzle is under-determined and multiple solutions exist in general. Models in place solve the problem by imposing additional economic assumptions, which, as argued within, cause a rejection of mathematical and/or economical feasibility. Different to that, by generating the implied correlation matrix from factors, this paper proposes two approaches to estimate a fully feasible solution. The presented quantitative approach computes the nearest implied correlation matrix with respect to a pre-specified target, hence can be used as a stand-alone method or to re-establish mathematical feasibility (especially positive semi-definiteness) of any other model's estimate. The economic approach, on the other hand, uses expected correlations between assets to risk factors (like CAPM \cite{Sharpe64}, Fama-French \cite{FF3}) as the determining structure of the implied correlation matrix.\\
	
	More precisely, for a market covering $n$ securities a correlation matrix $C\in\mathbb{R}^{n\times n}$ satisfies mathematical feasibility if (i) it is symmetric, (ii) all its elements lie within the $[\textrm{-}1,1]$ interval, (iii) it has unit diagonal and (iv) it is positive semi-definite (\textit{psd}). Economic feasibility is fulfilled if (v) the matrix is ex-ante free of arbitrage and (vi) the values are realistic in an economic sense. Conditions (i)-(vi) are necessary for $C$ to be a feasible implied correlation matrix. If conditions (i)-(vi) are met, it is also sufficient for $C$ to be a feasible solution of the implied correlation matrix. The definitions of the mathematical conditions are straight forward. From the thousands of publications relating to the factor zoo (e.g., \cite{Cochrane2011, Harvey2015}) it is known that dependencies between securities are of factor structure, hence existence of a factor structure will be used as the proxy of condition (vi). Ex-ante arbitrage-freeness of condition (v) is interpreted in the following sense. As for a market there are typically options traded on individual firms, but also for the market index directly, $C$ will fulfill condition (v) if it properly aggregates the prices of the constituents to match the observable price of the basket option (see e.g., \cite{Skintzi2005, Linders2016}). Hence a minimum requirement for $C$ being an economically feasible implied correlation matrix is
	\begin{align}
		\label{eq:C}
		\text{(v)}: \quad \sigma_m^2 = w' \sigma C \sigma w
	\end{align}
	with $\sigma_m^2\in\mathbb{R}_+$ as the implied variance of the market portfolio, $w\in\mathbb{R}^{n}$ the corresponding weight vector and $\sigma\in \mathbb{R}_+^{n\times n}$ a diagonal matrix of firm implied volatilities. To come up with a solution that matches all six conditions, $C$ will be inherently generated from a $k$-factor structure (\cite{Glasserman2007,Borsdorf2010}), which is derived in its final form as
	\begin{align}
		\label{eq:equilibrium}
		C(X) = J \circ XX' + I
	\end{align}
	having $X\in\mathbb{R}^{n\times k}$ as factor-loadings, $\circ$ the Hadamard product, $I$ the identity matrix and $J:=\mathbf{1}_{n\times n} - I$ with $\mathbf{1}_{n\times n}$ as the unity matrix of dimension as indicated in its subscript. In a factor-based asset pricing model like CAPM (\cite{Sharpe64}) or Fama-French (\cite{FF3, FF5}), $X$ can be interpreted as the correlations between a firm and the factor portfolios, such that introducing a correlation risk premium allows to match the implied variance of the market portfolio. 
	More generally, the question can be also re-formulated as a nearness problem to some target correlation matrix, with the target being freely chosen. In that case, no specific economic assumptions need to be imposed and the feasible solution of is found by a constrained minimization of the distance between $C(X)$ and the target. On an empirical example, this quantitative approach is numerically solved using the spectral projected gradient method with inexact restoration (\cite{Birgin2001, Gomes2009}), which turns out to be computational efficient.\\
	
	
	The reading is set up as follows. Section 2 provides a comparison of existing implied correlation models and argues potential violations of mathematical/economical feasibility. Section 3 discusses the factor model and its implementation for solving the implied correlation matrix. Section 4 illustrates an empirical application of both approaches (quantitative/nearest-implied, economic/risk-factors) of upon data of the S\&P 100 and S\&P 500. Section 5 concludes. Throughout the reading the Hadamard products like $J\circ XX'$ are used, the corresponding product rule should be read as $J \circ (XX')$, not as $(J\circ X)X'$ (i.e., the rule is "matrix-before-Hadamard product").

	\section{Discussion on Existing Models}
	From a literature review I could identify four distinct concepts for solving the implied correlation matrix. Generally, in quantitative finance it is often more convenient to work with implied volatilities directly instead of option prices (cp. \cite{Guo2016}), this also applies for literature on implied correlations. Given non-flat implied volatility surfaces, a common convention is to use options of same strike and maturity (\cite{Skintzi2005, Buss2012, Linders2016}). Within each of the four discussed concepts - and also at the presented framework - the requirement is set that financial options are traded upon a market index $m$ and on each of its constituents, which are weighted according to $w$ to form the market portfolio. From the setting follows, that the equilibrium between the observed portfolio variance $\sigma_m^2$ and its aggregate from constituents (Eq.\ref{eq:equilibrium}) has to hold for every $C$ as a necessary condition for $C$ to be free of arbitrage (cp. \cite{Skintzi2005}). The existing concepts for solving (non-FX) implied correlation matrices are now as follows.
	
	\paragraph{Equi-Correlations} Literature on implied correlations was initialized by \cite{Skintzi2005} in 2005. This model is not only cited in most of the subsequent research in that direction, it is also used as the fundamental concept behind the CBOE Implied Correlation Index. The idea of the method is as follows. Since for a given point in time the weights $w$ and the forward-looking measures $\sigma_m^2$ and $\sigma$ are available (or estimate-able), each off-diagonal element within $C$ can be set to the same scalar $\bar{c}$ to match the equilibrium of Eq.\ref{eq:equilibrium}. The equi-correlation $\bar{c}$ can then easily be computed by
	\begin{align}
		\bar{C}:= \bar{c} J + I: \quad \sigma_m^2 \equiv w' \sigma \bar{C} \sigma w \implies \bar{c} = \frac{\sigma_m^2 - w'\sigma^2 w}{w'\sigma J \sigma w}
	\end{align}
	(cp. \cite{Schadner2020b}). From the proof of \cite{Borsdorf2010}(Sect.2) follows, that $\bar{C}$ will be \textit{psd} if $\textrm{-}1/(n-1) \leq \bar{c} \leq 1$, which is also sufficient for mathematical feasibility. Within an economy, securities are typically on average positively correlated (the CBOE Implied Correlation Index is also thoroughly positive), so this should not be much of an issue empirically. However, it is unrealistic that each correlation pair inside $C$ takes on the same value, hence $\bar{C}$ is seen as economically unfeasible (\cite{Buss2012, Numpacharoen2013}). Also, $\bar{C}$ clearly rejects existence of a (risk-)factor structure (also discussed in \cite{Schadner2020b}), being in contradiction with empirical asset pricing literature. In case the market index also has option traded upon sub-indices (e.g., S\&P 500 sector indices), it is unlikely that the equicorrelation matrix will match them, hence is a further pitfall. Due to those reasons, the equi-correlation approach is more frequently used as an index of average diversification possibilities rather than as correlation matrix (e.g., \cite{Dhaene2012}). 
	
	\paragraph{Local Equicorrelations} The equilibrium (Eq.\ref{eq:equilibrium}) violation from existing sub-index options can be easily resolved by introducing local equicorrelations, where the correlations are averaged out within sub-squares of the matrix. This idea is mentioned in \cite{Buss2016}, who use options on the S\&P 500 (market) and the ten S\&P 500 sector indices (sub-indices). Herein, always the average correlation within a sub-index is computed, before computing the global equicorrelations outside the sub-portfolio. With $w_p$ as the weight of the sub-portfolio $p$ and $\sigma_p^2=w_p'\sigma C \sigma w_p$, local equicorrelations of different $p$'s are always centered alongside the diagonal of $C$. This rises the issue that large parts of $C$ cannot be covered and obviously, with a handful of sub-indices and a very large number of unknown correlation pairs, the resolving correlation matrix will still give a very blur and unrealistic picture.\footnote{For $n=500$, there are 124750 unknown correlation pairs. With 10 sub-indices, there are only 11 distinct estimates} For this definition of local equicorrelations, the \textit{psd} criterion does not necessarily hold. Also, as argued in \cite{Schadner2020b}, the known factor structure of security markets will still be ignored. \cite{Schadner2020b} further demonstrates how the factor-structure can be resolved from calibrating also off-diagonal local equicorrelations. But also this refinement yields in a very blur estimate of the implied correlation matrix. Hence, as the local equicorrelation method only slightly improves precision compared to the global equi-correaltion, it still causes unrealistic estimates of $C$ and is thus economically not feasible.
	
	\paragraph{Adjusted Ex-Post} 
	\cite{Buss2012, Buss2016} combine (local) equicorrelations with backward looking estimates to approximate the forward-looking implied one. To discuss this model, recap that due to the stochastic nature of variances and correlations, option prices are documented to carry risk premia for variance (VRP) and for correlation (CRP; see \cite{Buraschi2010,Buraschi2013, Driessen2013, Faria2018}). Let $\mathbb{P}$ denote investor expectations under the physical- and $\mathbb{Q}$ under the risk-neutral probability measure. Option implied volatilities are known to be $\mathbb{Q}$-measured, and let the physically expected correlation matrix be denoted by $C_\mathbb{P}$, then, the difference between implied market and aggregated individual volatilities can be used to express the volatility quoted ex-ante CRP as $CRP = \sigma_m^2 - w'\sigma C_\mathbb{P} \sigma w$ (cp. \cite{Driessen2009, Buraschi2013}). Building on this idea, let us introduce $A$ to denote a backward-looking estimated correlation matrix, in the example of \cite{Buss2012} it is simply the 1 year historically realized return correlations, but $A$ can also be chosen from a more sophisticated model (e.g., incorporating mean-reversion \cite{Buss2016}). For the model of \cite{Buss2012}, two crucial assumptions are obligatory. First, they introduce that the backward-looking $A$ is equivalent to the forward-looking $\mathbb{P}$ matrix, $A \equiv C_\mathbb{P}$. While investors use backward-looking information to form their believes, there is no fundamental reason for this to hold, as $A$ does not carry information on the market outlook (see e.g., \cite{Skintzi2005}). The second assumption is that the correlation risk premium enters into the matrix in the specific form of 
	\begin{align}
		\label{eq:CQ}
		C_\mathbb{Q} := C_\mathbb{P} - \alpha(\mathbf{1}_{n\times n} - C_\mathbb{P})
	\end{align}
	with $\alpha$ as a scalar calibrating for the correlation risk premium and $C_\mathbb{Q}$ as their estimate of the implied correlation matrix. The two assumptions come at a mathematical convenience. Let $\hat{\alpha}:=\textrm{-}\alpha $, then by rearranging terms the equation can be brought into the form $C_\mathbb{Q} = \hat{\alpha} \mathbf{1}_{n\times n} + (1-\hat{\alpha}) C_\mathbb{P}$, known in literature as 'weighted average correlation matrix' (\cite{Numpacharoen2013b}). Since both $\mathbf{1}_{n\times n}$ and $C_\mathbb{P}$ are \textit{psd}, and the sum of two \textit{psd} matrices is also \textit{psd}, mathematical feasibility of $C_\mathbb{Q}$ holds for $\alpha \in (\textrm{-}1,0]$, which is fulfilled when $CRP>0$. Empirically, however, $\alpha$ is likely to fall outside that range and feasibility of the method does not hold in general.\footnote{On monthly S\&P 100 data from 1996 to 2020 with $A$ as the one year historically realized correlation matrix, I observe that $CRP>0$ in 156 and $CRP<0$ in 144 of the 300 monthly estimates. Hence the requirement did not hold for 48\% of the time.} This rejection of mathematical feasibility is also the main critique point stated in \cite{Numpacharoen2013}, who provide a workaround for the $CRP<0$ cases. In this model, whenever $\alpha > 0$ the matrix $\mathbf{1}_{n\times n}$ of Eq.\ref{eq:CQ} is replaced by the 'equicorrelation lower-bound' $L$ defined as $\forall i\neq j: L_{ij} = \textrm{-}1/(n-1)$ and unit diagonal. Note that $L$ is simply the smallest possible \psd\ equicorrelation matrix (cp. the discussion on equicorrelations above). Following, in the \cite{Numpacharoen2013} modification, the implied correlation matrix is computed by
	\begin{align}
		\label{eq:expo}
		C_\mathbb{Q} = \begin{cases}
			\hat{\alpha} \mathbf{1}_{n\times n} + (1-\hat{\alpha}) C_\mathbb{P}, \quad \text{for} \quad CRP \geq 0 \implies \hat{\alpha} \geq 0\\
			\hat{\alpha} L + (1-\hat{\alpha}) C_\mathbb{P}, \qquad \text{for} \quad \hat{\alpha} < 0
		\end{cases}
	\end{align}
	. \cite{Buss2012} discuss that consistent investor preferences require that all $\mathbb{P}$-correlation pairs are scaled in the same direction, under general risk-aversion, this means up if $CRP > 0$ and down when $CRP < 0$. This consistency clearly holds in \cite{Buss2012} and \cite{Numpacharoen2013} for the $\hat{\alpha}\geq 0$ cases. Analyzing the workaround of \cite{Numpacharoen2013} in greater detail, one recognizes that off-diagonal entries inside $L$ are slightly negative and close to zero (e.g., \textrm{-}0.01 for $n\!=\!100$ and -0.002 for $n\!=\!500$). This causes that for $\hat{\alpha}<0$, almost every negative $\mathbb{P}$-correlation pair will be up-scaled, while positive ones will be down-scaled. Therefore, the \cite{Numpacharoen2013} workaround repairs mathematical feasibility of the basic adjusted ex-post model, but at the same time causes an inconsistent implementation of the correlation risk premium such that economical feasibility is rejected, leaving an insufficient solution of the implied correlation matrix behind.
	While mathematical/economical flaws still preserve, compared to the other concepts in place the adjusted ex-post method seems to be the most realistic one. The nearest implied correlation algorithm as introduced in Section 3 can be applied upon these models to repair feasibility.
	
	\paragraph{Skewness Approach} Worth to mention, \cite{Kempf2015,Chang2011} use a CAPM-like model, introduce economic assumptions to cancel out mathematical relationships, and estimate implied correlations between a stock and the market portfolio by combining option implied volatilities with risk-neutral skewness. Since CAPM is a factor model, the estimates correspond to $X$ of Eq.\ref{eq:C} and their approach can thus be used for a solution to the implied correlation matrix. However, this approach does not coerce with market conditions (Eq.\ref{eq:equilibrium}), nor does it stick to the boundaries $C_{ij}\in[\textrm{-}1,1]$, so mathematical and economical feasibilities are ignored at this approach. On the other hand, the \psd\ condition is easily met as $C$ is generated from the factor-model (Eq.\ref{eq:C}).\\	
	
	\section{Solutions from Factor Structures}
	
	The method uses implied volatilities as input parameters, which can be estimated in various ways (see \cite{Guo2016} for a detailed discussion). Dependent on what kind of implied volatilities to use (e.g., centered vs. directly parameterized \cite{Azzalini2008}), the methodology is not limited to Pearson type correlation matrices and can also be used within more sophisticated option pricing models, adjusting for non-normal distributions. An example for the case of a multivariate variance-gamma process can be found at the appendix.\\
	
	The for financial markets typical correlation structure is created out of the multi-factor copula model described by \cite{Glasserman2007}, which is used in a nearest correlation matrix context by \cite{Borsdorf2010}. The factor generating core of the model follows a simple but intuitive definition, that is
	\begin{align}
		{\xi} = {X}{\eta} + {F} {\epsilon}
	\end{align}
	where ${\xi}\in\mathbb{R}^n$ describes a random vector, ${X}\in\mathbb{R}^{n\times k}$ capturing factor exposures, ${F}\in\mathbb{R}^{n\times n}$ a diagonal matrix and ${\eta}\in\mathbb{R}^k$ corresponding to the factor's magnitude. All three vectors, ${\xi}$, ${\eta}$ and ${\epsilon}\in\mathbb{R}^n$ are defined to have zero mean and unit variance; ${\eta}$ and ${\epsilon}$ are orthogonal, 
	\begin{align}
		E[\xi] 	= E[\eta] = E[\epsilon]=0, \qquad var(\xi) = var(\eta) = var(\epsilon) = 1, \qquad cov(\eta, \epsilon) = 0
	\end{align}
	. From this follows that 
	\begin{align}
		\label{eq:coco}
		cov({\xi}) = E[{\xi}{\xi}'] = {X}{X}' + {F}^2
	\end{align}
	. Since ${\xi}$ has unit variance, $cov({\xi})$ turns out to be a correlation matrix with the boundaries of
	\begin{align}
		\label{eq:feas}
		\forall i = \{1,...,n\}: \quad	\sum_{d=1}^k X_{i,d}^2 + F_{ii}^2 = 1 \quad \implies \quad  \sum_{d=1}^k X_{i,d}^2 \leq 1 
	\end{align} 
	such that ${X}$ is necessarily limited to $[\textrm{-}1,1]$. From the above equation follows, that every $X_{i,d}^2$ corresponds to the goodness-of-fit of stock $i$ explained by the the risk factor $d$, and ${F}_{ii}^2$ to the unsystematic correlation which can not be explained by the given set of risk factors. Note, that ${XX}'$ is positive semi-definite by construction. The same is true for the diagonal matrix ${F}^2$, hence the sum of ${XX}'$ and $F^2$ will also be \textit{psd}. From the notation follows that $X_{i,d}$ itself can be interpreted as the correlation of a stock $i$ to a risk factor $d$ (e.g., $X_{i,m} = corr(i,m)$ in CAPM). $F^2$ itself is of less interest for further modeling as it can be easily computed once a feasible solution to $X$ is found. Therefore, the factor-structured correlation matrix of Eq.\ref{eq:coco} can be equivalently rewritten in a form where $F^2$ is suppressed and $X$ remains the only unknown. This reformulation will be used in the specific context as the factor-structured implied correlation matrix $C(X)$, which is now denoted by
	\begin{align}
		\label{eq:co}
		C(X) = J \circ XX' + I
	\end{align}
	.\footnote{Note that other literature (e.g., \cite{Borsdorf2010}) writes $C(X) = XX' - diag(XX') + I$ with $diag(XX')$ as the diagonal matrix of $XX'$. This alternative definition yields the same $C(X)$, however, I believe the version of Eq.\ref{eq:co} to be more efficient in terms of notation (looks cleaner to me), and in terms of computation as $J$ can be precomputed (thus not iterated) while $diag(XX')$ cannot.} Recap that $J=\mathbf{1}_{n\times n}-I$, hence $J \circ XX'$ has zero diagonal, and adding the identity matrix ensures that $C(X)$ strictly has a unit diagonal. Since $C(X)$'s off-diagonal elements are generated from $XX'$ (\textit{psd} by construction), and $diag(C(X))=\mathbf{1}_{n} \geq diag(XX')$, it follows that $C(X)$ is \textit{psd} for every feasible $X$.\\
	
	Generating $C(X)$ from Eq.\ref{eq:co} thus ensures a mathematically feasible correlation matrix. For economic feasibility, the equilibrium conditions of Eq.\ref{eq:equilibrium} need to be met. Therefore, the options traded on the market- and its sub-indices restrict possible values of $X$, I will call those restriction the "market-constraints". Given $n_c$ many market-constraints, the general solution to the factor-structured implied correlation matrix now evolves as
	\begin{align}
		\label{eq:general}
		C(X) =  J \circ XX' + I \qquad \text{subject to} \qquad \begin{cases}
			X \in \Omega := \left\{X\in \mathbb{R}^{n\times k}\!: \sum_{d=1}^{k} X_{i,d}^2\leq1, \forall i=\{1,..., n\} \right\}\\
			\sigma_j^2  = w_j'\sigma C(X) \sigma w_j, \quad \forall j = \{1,...,n_c\}
		\end{cases}
	\end{align}
	. From this definition follows, that $\Omega$ is a closed convex set of real numbers. Note that the number of unknown correlation pairs of a correlation matrix is $n(n-1)/2$ and the number of observable implied volatilities is $n + n_c$. Empirically, it is given that $n + n_c \ll n(n-1)/2$ almost surely, hence many different numerical solution to $X$ exist. To come up with a reasonable choice of $X$, this paper argues in favor of two approaches. First, a purely quantitative approach computing the feasible implied correlation matrix that is nearest to some pre-specified target. Second, an economic approach using factor-based asset pricing models like CAPM (\cite{Sharpe64}) or Fama-French (\cite{FF3}), assuming that $\mathbb{P}$ expected correlations to risk-factors can be estimated.\\
	
	The set $\Omega$ can be expressed as inequality constraints, and the market-constraints are of type equality. From the equality constraints follows that the feasible set is a subset (hypersurface if $n_c\!=\!1$) within $\Omega$, denoted by $\dot{\Omega}$ in the subsequent. For optimization purposes, one may consider the following definitions.
	\begin{definition}
		Let $h(X)$ define the function corresponding to the inequality constraints,
		\begin{align}
			h(X) := \mathbf{1}_{n \times 1} - (X \circ X)\mathbf{1}_{k \times 1}, \qquad h(X): \mathbb{R}^{n\times k}\rightarrow \mathbb{R}^{n}
		\end{align}
		such that $h(X)\geq \mathbf{0}_{n\times1}$ guarantees that $X\in\Omega$, which is necessary and sufficient for mathematical feasibility.
	\end{definition}
	
	\begin{definition}
		\label{def:g}
		Let $g(X)$ define the vector of market constraints, which simply stacks Eq.\ref{eq:equilibrium} for every (sub-)index that has option contracts traded upon,
		\begin{align}
			\forall j =\{2, ..., n_c\}: \quad g(X)  = \begin{pmatrix}
				\sigma_m^2\\
				\sigma_j^2\\
				\vdots
			\end{pmatrix} - \begin{pmatrix}
				w' \sigma C(X) \sigma w\\
				w_j' \sigma C(X) \sigma w_j\\
				\vdots
			\end{pmatrix}, \qquad g(X):= \mathbb{R}^{n\times k}\rightarrow \mathbb{R}^{n_c}
		\end{align}
		with $j=1$ reserved for the market portfolio $m$. Hence $g(X) = \mathbf{0}_{n_c\times 1}$ is necessary for economic feasibility. 
	\end{definition}
	The equality constraint is necessary, but not sufficient for economic feasibility. The best example is the (local-) equicorrelation model, which meets this requirement but yields in an unrealistic implied correlation matrix, such that economic feasibility is not met. Stacking constraints at Definition \ref{def:g} follows an economic intuition, but $C(X)$ is non-uniquely represented which potentially increases computational effort for large $n, n_c$. An alternative representations of $g(X)$ with $C(X)$ entering uniquely is
	\begin{align}
		g(X) &= \begin{pmatrix}
			\sigma_m^2\\
			\sigma_j^2\\
			\vdots
		\end{pmatrix} - [I \circ (W' C(X) W)] \bm{1}_{n_c} \qquad \text{where} \qquad W = \begin{pmatrix}
			w & w_j & \dots
		\end{pmatrix}, \quad \forall j = \{2, \dots, n_c\}
	\end{align}
	This representation computes a covariance matrix of constraints in a first step, where off-diagonal elements are eliminated in a second step by the Hadamard multiplication with $I$.\footnote{Alternatively to this, one could also vectorize $g(X)$'s right part of Definition 3.2 to  $g(X) = \begin{pmatrix}
			\sigma_j^2\\
			\vdots
		\end{pmatrix} - 
		\begin{pmatrix}
			w_j' \otimes w_j'\\
			\vdots
		\end{pmatrix} (\sigma \otimes \sigma)vec(C(X)), \forall j=\{1,...,n_c\}$. However, the Kronecker multiplication yields into allocation of very high-dimensional vectors, which potentially causes that storage limits are exceeded.}

	\subsection{Quantitative Approach: Computing Nearest Implied}
	
	Consider one has an educated guess of what the forward-looking correlation matrix potentially looks like. For example, this educated guess could be derived from historically realized correlations, from a GARCH forecast, or from a not fully feasible estimate of the discussed models in Section 2. As before, this educated guess is denoted by $A$, but different to \cite{Buss2012, Numpacharoen2013}, the assumption that $A=C_\mathbb{P}$ does not necessarily hold. With such an educated guess one can now search for a $C(X)$ that is as similar as possible to $A$,  but satisfies all market constraints in order to be considered a feasible implied correlation matrix. 
	
	\subsubsection{Formulating the Problem}
	
	Finding the most similar of a target matrix is known in mathematical literature as the nearest correlation matrix problem (e.g., \cite{Borsdorf2010}). Similarity between the generated $C(X)$ and the target matrix $A$ can be quantified by the squared Frobenius norm between them, 
	\begin{align}
		f(X) = \left\|C(X) - A \right\|_F^2, \qquad f(X): \mathbb{R}^{n \times k} \rightarrow \mathbb{R}_+
	\end{align}
	which, when introducing $\hat{A} = A -I$, can also be written as
	\begin{align}
		f(X) = \left\|J\circ XX' - \hat{A} \right\|_F^2
	\end{align}
	\begin{lemma} The gradient of $f(X)$ is
		\begin{align}
			\label{eq:grad}
			\nabla_X f = 4(J\circ XX' - \hat{A})X, \qquad \nabla_X f \in \mathbb{R}^{n\times k}
		\end{align}
	\end{lemma}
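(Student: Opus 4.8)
The plan is to compute the differential $df$ as a linear form in $dX$ and read off the gradient through the Frobenius inner product $\langle A, B\rangle := \operatorname{tr}(A'B) = \sum_{ij} A_{ij}B_{ij}$. First I would abbreviate $M := J\circ XX' - \hat{A}$ so that $f(X) = \langle M, M\rangle$, and record two structural facts that the argument will lean on. One, $M$ is symmetric, because $XX'$, $J$, and $\hat{A} = A - I$ all are. Two, $M$ has zero diagonal, because $J\circ XX'$ has zero diagonal (as $J$ does) and $\hat{A}$ has zero diagonal (as $A$ has unit diagonal).

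Differentiating, linearity of the Hadamard product together with the product rule give $dM = J\circ(dX\,X' + X\,dX')$, while the chain rule for the squared norm gives $df = 2\langle M, dM\rangle$. The key algebraic device is the elementwise adjoint identity $\langle M, J\circ D\rangle = \langle J\circ M, D\rangle$, valid because both sides equal $\sum_{ij} M_{ij}J_{ij}D_{ij}$; applying it with $D = dX\,X' + X\,dX'$ transfers the Hadamard factor from the argument onto $M$, leaving $df = 2\langle N, dX\,X' + X\,dX'\rangle$ with $N := J\circ M$.

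Next I would split this into the two terms arising from $dX\,X'$ and $X\,dX'$ and reduce each to a pairing against $dX$ using cyclicity of the trace and its invariance under transposition, namely $\langle N, dX\,X'\rangle = \langle NX, dX\rangle$ and $\langle N, X\,dX'\rangle = \langle N'X, dX\rangle$. Since $M$ is symmetric, so is $N$, hence $N' = N$ and the two terms coincide, yielding $df = 4\langle NX, dX\rangle$ and therefore $\nabla_X f = 4NX = 4(J\circ M)X$. Finally, because $M$ has zero diagonal, $J\circ M = M$, which collapses the expression to the claimed $\nabla_X f = 4(J\circ XX' - \hat{A})X$.

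The step I expect to demand the most care is the interaction between the Hadamard and Frobenius products: correctly invoking the adjoint identity, and verifying that the two product-rule terms genuinely merge rather than only appearing to. The symmetry of $M$ and the zero-diagonal observation are doing real work here, since without them one would be left with $2(J\circ M + (J\circ M)')X$ and an uncancelled diagonal contribution. I would therefore establish both facts explicitly before using them; the remaining manipulations are routine trace algebra.
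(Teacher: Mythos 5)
Your proof is correct and follows essentially the same route as the paper's: differentiate the squared Frobenius norm, move the Hadamard factor $J$ across the inner product onto $M$, use symmetry of $J\circ M$ to merge the two product-rule terms into $4(J\circ M)X$, and collapse $J\circ M$ to $M$ (the paper phrases this last step via $J\circ J=J$ and $J\circ\hat{A}=\hat{A}$, which is the same zero-diagonal observation you make). No substantive differences.
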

	\begin{proof}
		For simplicity, let $M:=J\circ X X' - \hat{A}$ and the Frobenius product (trace operator) be denoted by the colon $:$, $tr(M'M) = M\!:\!M$. The problem can then be written as $f(X) = \left\|M \right\|_F^2 = M\!:\!M$ and its differential is thus $df = 2M\!:\!dM$. The differential of $M$ itself is $dM = J\circ d(XX')$, and $d(XX')=dXX' + X dX'$. Both, $:$ and $\circ$ are mutually commutative operators, hence $M\circ J = J \circ M, \; M\!:\!J=J\!:\!M, \; M\!:\!J\circ d(XX') =  M\circ J \!:\! d(XX')$. Therefore, substituting back in yields
		\begin{align}
			df = 2 J \circ M:(dXX'+ X dX') = 2 (J\circ M + J'\circ M'):dXX',
		\end{align}
		since $J$ and $M$ are symmetric, $J = J'$ and $M = M'$, this reduces to $df = 4 J\circ M X : dX$, and
		\begin{align}
			\nabla_X f = \frac{\partial f}{\partial X} = 4 J \circ M X
		\end{align}
		follows. As $J\circ J = J$ and $J \circ \hat{A} = \hat{A}$, plugging back in $M$ gives Eq.\ref{eq:grad}.\footnote{This result is equivalent to the gradient as found in \cite{Borsdorf2010}, who derive it as $\nabla_X f = 4(XX'- diag(XX') - \hat{A} )X$}
	\end{proof}
	Popular optimization methods (like SQP) build on the Lagrangian function, which for the problem at hand can be formulated as 
	\begin{align}
		\mathcal{L}(X, \lambda, \kappa) = f(x) + \lambda'g(x) + \kappa'h(x))
	\end{align}
	with $\lambda\in \mathbb{R}^{n_c}$ and $\kappa \in \mathbb{R}^{n}$ representing the Lagrangian multipliers. Working with the Lagrangian is probably the most common practice, hence respective gradients are also reported below.
	
	%
	\begin{lemma} The gradient of $\lambda' g(X)$ with respect to $X$ is
		\begin{align}
			\label{eq:grad2}
			\nabla_X (\lambda' g(X)) = 2\sigma\left(J \circ \sum_{j=1}^{n_c} \lambda_j w_jw_j'\right)\sigma X
		\end{align}
	\end{lemma}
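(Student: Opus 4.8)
The plan is to reduce the claimed vector identity to a single scalar differentiation that is then reassembled by linearity, reusing the Frobenius-product calculus already employed for $\nabla_X f$ (the colon notation $A\!:\!B=\mathrm{tr}(A'B)$ together with the mutual commutativity of $:$ and $\circ$). Writing $\lambda'g(X)=\sum_{j=1}^{n_c}\lambda_j g_j(X)$ with $g_j(X)=\sigma_j^2-w_j'\sigma C(X)\sigma w_j$ as in Definition~\ref{def:g}, the constants $\sum_j\lambda_j\sigma_j^2$ vanish under $\nabla_X$, so it suffices to differentiate $\sum_j\lambda_j\,w_j'\sigma C(X)\sigma w_j$. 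Substituting $C(X)=J\circ XX'+I$ splits each summand into $w_j'\sigma(J\circ XX')\sigma w_j$ plus the $X$-independent term $w_j'\sigma^2 w_j$, so only the former contributes.

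First I would treat a single index $j$. Putting $u_j:=\sigma w_j$ (legitimate because $\sigma$ is diagonal and symmetric), the scalar $\phi_j:=u_j'(J\circ XX')u_j$ becomes, via the identity $a'Ma=M\!:\!aa'$, the Frobenius product $\phi_j=(J\circ XX')\!:\!u_ju_j'$. Applying the same commutativity used for $\nabla_X f$, namely $(J\circ XX')\!:\!u_ju_j'=XX'\!:\!(J\circ u_ju_j')$, and setting $P_j:=J\circ u_ju_j'$, one gets $\phi_j=P_j\!:\!XX'=\mathrm{tr}(P_jXX')$. As $P_j$ is symmetric, differentiating yields $d\phi_j=2\,(P_jX)\!:\!dX$, hence $\nabla_X\phi_j=2P_jX$.

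The key algebraic move is to push the diagonal scalings back out of the Hadamard product. Because $\sigma$ is diagonal, entrywise one has $\big(J\circ(\sigma M\sigma)\big)_{ab}=\sigma_{aa}(J\circ M)_{ab}\,\sigma_{bb}$, i.e. $J\circ(\sigma M\sigma)=\sigma(J\circ M)\sigma$ for every $M$. With $M=w_jw_j'$ this rewrites $P_j=J\circ(\sigma w_jw_j'\sigma)$ as $\sigma(J\circ w_jw_j')\sigma$, so $\nabla_X\phi_j=2\sigma(J\circ w_jw_j')\sigma X$. Summing over $j$ with weights $\lambda_j$ and using bilinearity of $\circ$ to pull the sum inside gives $2\sigma\big(J\circ\sum_{j}\lambda_j w_jw_j'\big)\sigma X$, which is the asserted gradient.

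I expect the difficulties to be purely in the bookkeeping rather than conceptual. The two points to watch are: (i) the overall sign, which depends on the ordering $\sigma_j^2-(\cdots)$ in Definition~\ref{def:g}; one must confirm that both $\sigma_j^2$ and the diagonal contribution $w_j'\sigma^2 w_j$ are genuinely $X$-free and therefore drop, and then fix the sign convention so that it matches the stated expression; and (ii) the justification of $J\circ(\sigma M\sigma)=\sigma(J\circ M)\sigma$, which is the single place where diagonality of $\sigma$ is indispensable and which is precisely what permits the two $\sigma$ factors to be extracted symmetrically onto either side of the Hadamard product.
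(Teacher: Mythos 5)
Your proof follows essentially the same route as the paper's: the same decomposition $\lambda'g(X)=\sum_{j}\lambda_j g_j(X)$, the same trace/Frobenius-product calculus yielding $d\,(w_j'\sigma(J\circ XX')\sigma w_j)=2\,(J\circ(\sigma w_jw_j'\sigma))X:dX$, and the same key identity $J\circ(\sigma w_jw_j'\sigma)=\sigma(J\circ w_jw_j')\sigma$ (which does rest on $\sigma$ being diagonal, exactly as you say) before summing by linearity. The one point you flag as "to be checked" --- the overall sign --- is not a resolvable bookkeeping detail but a genuine inconsistency in the source: since $g_j(X)=\sigma_j^2-w_j'\sigma C(X)\sigma w_j$, the gradient of $\lambda'g(X)$ is $-2\sigma(J\circ\sum_j\lambda_j w_jw_j')\sigma X$, and indeed the paper's own proof arrives at $\nabla_X g_j=-2(B\circ J)X$ and $\lambda_j\nabla_X g_j=-2\lambda_j\sigma(J\circ w_jw_j')\sigma X$ before the lemma is nonetheless stated with a positive sign. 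Your computation of the gradient of the quadratic form $\phi_j$ is correct, but your final sentence identifies $\nabla_X\sum_j\lambda_j\phi_j$ with $\nabla_X(\lambda'g)$, thereby reproducing the paper's sign slip rather than resolving it; with Definition~\ref{def:g} taken literally, the stated formula should carry a minus sign.
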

	\begin{proof}
		$\lambda'g(X)$ can be written as $\sum_{j=1}^{n_c}\lambda_j g_j(X)$. Focus on one market constraint $g_j(X)$, rearranged into the form $g_j(X) = \sigma_j^2 - w_j'\sigma (J \circ XX')\sigma w_j - w_j'\sigma^2 w_j$. To abbreviate notation, let $\hat{M}:=J\circ XX'$ and $B:=\sigma w_j w_j' \sigma$ as symmetric matrices (i.e., $\hat{M}'=\hat{M}$ and $B'=B$). For the gradient $\nabla_X g_j$, all non-$X$ terms cancel out so the focus is on $-w_j'\sigma \hat{M} \sigma w_j$, which can be expressed as the matrix trace $-tr(B \hat{M})$. The differential of $g_j$ is thus reduced to $dg_j= -tr(B d\hat{M}) = B : d\hat{M}$. From the proof of Lemma 3.1 it follows that $d\hat{M} = J\circ (dXX' + XdX')$. Therefore, the differential writes
		\begin{align}
			dg_j = -B \circ J :(dXX'+XdX') = -2 (B\circ J):dXX' = -2(B\circ J)X : dX
		\end{align} 
		hence the gradient of $g_j$ is given as $\nabla_X g_j = -2(B\circ J)X$. Note that $B\circ J$ equals $\sigma(J\circ w_j w_j')\sigma$, so substituting back in and multiplying by $\lambda_j$ gives
		\begin{align}
			\lambda_j\nabla_X g_j = -2 \lambda_j \sigma (J\circ w_j w_j')\sigma X
		\end{align}
		thus, the lemma evolves from $\nabla_X( \lambda'g(X) )= \sum_{j=1}^{n_c}\lambda_j\nabla_X g_j$.
	\end{proof}
	\begin{lemma}
		Let $D_\kappa$ denote the diagonal matrix of $\kappa$, the gradient of $\kappa'h(X)$ can then be written as
		\begin{align}
			\nabla_X(\kappa'h(X)) = -2 D_\kappa X
		\end{align}
	\end{lemma}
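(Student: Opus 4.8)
The plan is to reuse the Frobenius-differential machinery from the proofs of Lemmas 3.1 and 3.2, though the computation here is far simpler because $h(X)$ depends on $X$ only through the entrywise square $X\circ X$. First I would split $\kappa'h(X)$ into a constant piece and an $X$-dependent piece. Since $h(X) = \mathbf{1}_{n\times1} - (X\circ X)\mathbf{1}_{k\times1}$, we have $\kappa'h(X) = \kappa'\mathbf{1}_{n\times1} - \kappa'(X\circ X)\mathbf{1}_{k\times1}$; the first summand is independent of $X$ and drops out of the gradient, leaving only $\phi(X) := \kappa'(X\circ X)\mathbf{1}_{k\times1}$ to differentiate.

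Next I would compute the differential of $\phi$. The key identity is $d(X\circ X) = 2\,X\circ dX$, which follows entrywise from $d(X_{i,d}^2) = 2X_{i,d}\,dX_{i,d}$. Substituting gives $d\phi = 2\,\kappa'(X\circ dX)\mathbf{1}_{k\times1} = 2\sum_{i,d}\kappa_i X_{i,d}\,dX_{i,d}$, which I would then recognise as a Frobenius product $d\phi = 2\,(D_\kappa X):dX$, since the $(i,d)$ entry of $D_\kappa X$ is exactly $\kappa_i X_{i,d}$. Reading off the gradient yields $\nabla_X\phi = 2 D_\kappa X$, and restoring the sign from the split gives $\nabla_X(\kappa'h(X)) = -2D_\kappa X$, as claimed.

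There is no genuine obstacle in this lemma; it is the easiest of the three. The only point needing care is the bookkeeping between the index form $\sum_{i,d}\kappa_i X_{i,d}\,dX_{i,d}$ and its compact matrix form $(D_\kappa X):dX$ — one must verify that premultiplying $X$ by the diagonal matrix $D_\kappa$ scales the $i$-th row by $\kappa_i$, which matches the factor $\kappa_i$ attached to every term in row $i$ of the sum. Equivalently, one could bypass differentials altogether and differentiate componentwise, obtaining $\partial(\kappa'h)/\partial X_{i,d} = -2\kappa_i X_{i,d}$, whose arrangement into an $n\times k$ matrix is precisely $-2D_\kappa X$.
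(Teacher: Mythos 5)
Your argument is correct and matches the paper's proof in substance: the paper likewise writes $\kappa'h(X)=\kappa'\mathbf{1}_{n\times1}-\kappa'X^{\circ 2}\mathbf{1}_{k\times1}$ and reads off $\partial(\kappa'h(X))/\partial x_{id}=-2\kappa_i x_{id}$ componentwise, which is exactly the ``equivalent'' route you mention at the end. Your Frobenius-differential presentation is just a repackaging of the same computation, so there is nothing to add.
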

	\begin{proof}
		The term $\kappa'h(X)$ can be alternatively written as $\kappa'\mathbf{1}_{n\times 1} -\kappa'X^{\circ 2} \mathbf{1}_{k\times 1}$, where $X^{\circ 2}$ means that each element in $X$ is squared (Hadamard quadratic). Thus, $\partial (\kappa'h(X))/\partial x_{id} = -2\kappa_i x_{id}$, which  for $i\in\{1,...,n\}, d\in\{1,...,k\}$ yields into the result above.
	\end{proof}

	\subsubsection{Numerical Method}
	
	The nearest implied correlation matrix can now be attained by the following optimization:
	\begin{align}
		C(X) = J \circ XX' + I \quad & \text{with} \quad X = \argmin_{X} \mathcal{L}(X,\lambda,\kappa)
	\end{align}
	As the objective function $f(X)$ is not convex, solutions found are expected to be local minima. A performance comparison of numerical methods for a general nearest correlation matrix problem (i.e., without equality constraints $g(X)$) was made by \cite{Borsdorf2010}. Their recommendation is, that the spectral projected gradient (SPGM) is the most efficient and reliable method for this task.\footnote{Compared algorithms include 'alternating directions', 'principal factors method', spectral projected gradient method' and Newton-based methods, see \cite{Borsdorf2010} for details.} Based on this finding, the SPGM is also used as main algorithm within this study for solving the implied correlation matrix. A brief comparison to a SQP based solver is included at the empirical section.
	
	SPGM was initially introduced by \cite{Barzilai1988} and is used in many different studies, \cite{Birgin2014} summarize a long list of corresponding works, applications to financial data include \cite{Higham2002, Borsdorf2010}. A detailed discussion on the algorithm can be found in \cite{Birgin2001,Birgin2014}. Technically, the SPGM comes with three advantages when compared to other potential candidates. First, the method 'only' requires the gradient - known for this task -, but not the Hessian matrix. Second, the non-linear constraints are  guaranteed to be satisfied and third, the method is ensured to converge towards the optimum (cp. \cite{Barzilai1988}). In a nutshell, the method minimizes a continuous differentiable function on a nonempty closed convex set via projecting possible values of $X$ back onto the feasible set. Consequently, a key requirement is that projections on the feasible set can be made at low computational effort. If so, then the method provides a very efficient way of handling the constraints. For the application at hand, handling the equality constraint alongside $\Omega$ can become tricky. This issue can be resolved by combining SPGM with a more general inexact restoration framework (e.g., \cite{Martinez2000} for a discussion). The IR-SPGM was introduced by \cite{Gomes2009} who also provide an in-depth explanation of the algorithm, hence the details will not be repeated here. Roughly speaking, the algorithm can be broken down into two phases: a restoration phase, projecting $X$ back onto the feasible set, and an optimization phase, computing the step size. For the particular application at hand, greater attention has to be paid for the projection function. Recap that $\dot{\Omega}$ defines the feasible region that satisfies both $\Omega$ and the market-constraints. As mentioned by \cite{Gomes2009}, for some $X$ outside $\dot{\Omega}$, the projection function $P_{\dot{\Omega}}(X)$ should be defined such that $P_{\dot{\Omega}}(X) \approx \argmin_{\dot{X}} \| X - \dot{X}\|_2$ where $\dot{X}\in {\dot{\Omega}}$. This means that the projection function should find a point in $\dot{\Omega}$ that is approximately the closest to $X$ (inexact restoration). Obviously, $P_{\dot{\Omega}}(X)$ is an orthogonal projection if an $\dot{X}\perp X$ exists, which is not necessarily the case given the upper boundaries of $\dot{X}$ (Eq.\ref{eq:feas}). Fig. \ref{fig:omega} visualizes the projection and feasible set on a simplified example.
	\begin{figure}[h]
		\centering
		\includegraphics[scale=0.7]{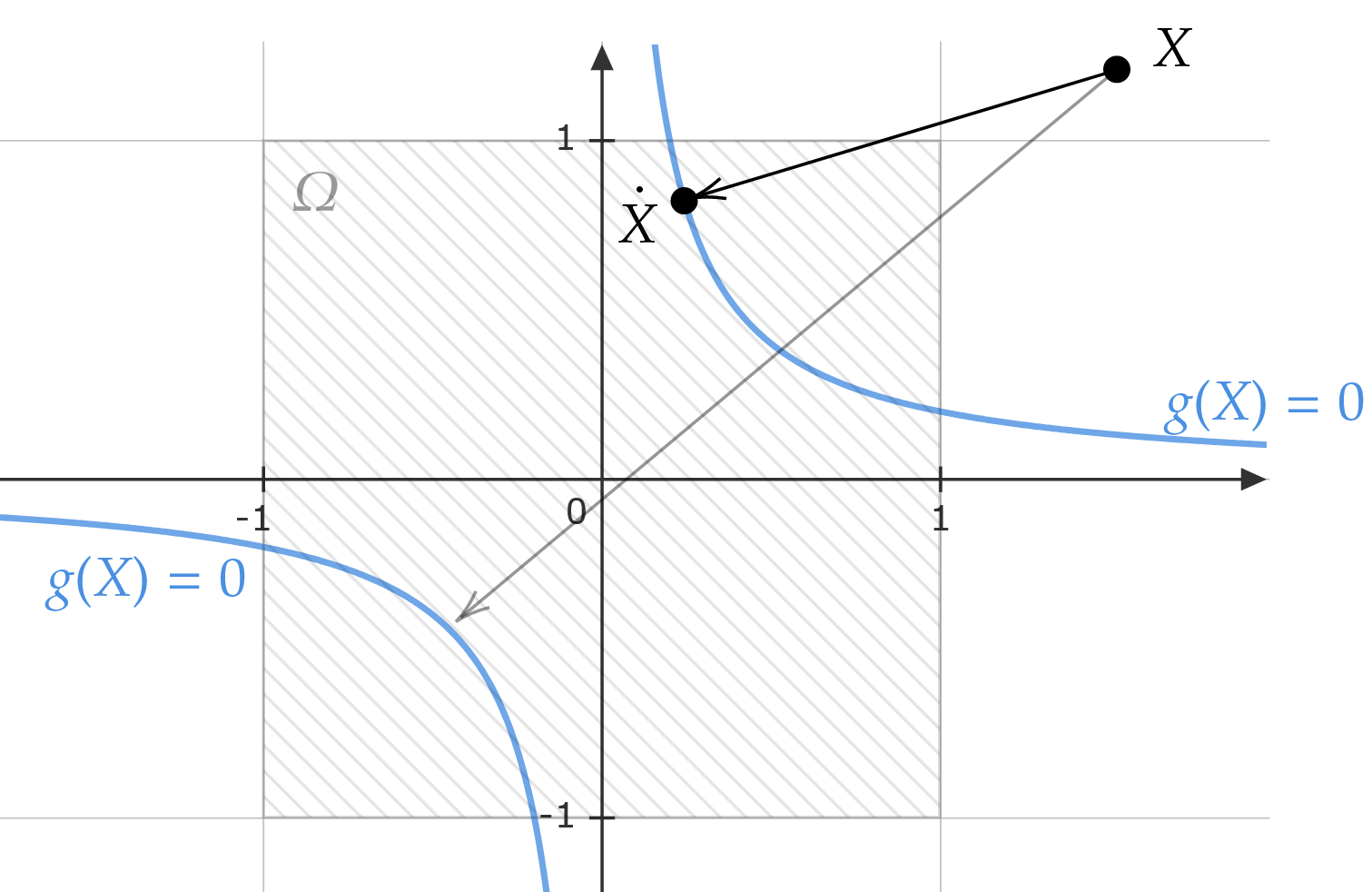}
		\caption{Visualization of the constraints for the two-assets/one-factor/one-market-constraint case. The inequality constraint $\Omega$ spans the gray box of technically feasible solutions, the blue line defines the solutions which satisfy the market constraint. As can be seen, the market constraint actually consists of two convex curves. Hence, two orthogonal projections of $X$ onto $g(X)=0$ exist, but only one (i.e., $\dot{X}$) has minimum distance to $X$.}
		\label{fig:omega}
	\end{figure}
	
	For the inexact restoration framework the projection is split up into two functions, a projection onto the inequality constraint $P_\Omega(\cdot)$ and one onto the equality constraint $P_E(\cdot)$. As for $P_\Omega(\cdot)$, \cite{Borsdorf2010} already discuss that this projection can be easily carried out by replacing every row $i$ of $X$ which exceeds $\sum_{d=1}^{k} X_{i,d}^2 > 1$ by $X_i/\|X_i\|$. So the projection on $\Omega$ comes at very low computational effort. 
	Let $\sigma_m^2$ induce the only market-constraint and $X_E$ is any point that matches it such that $g(X_E)=0$. Given so, one can formulate the Lagrangian as
	\begin{align}
		\mathcal{L}_{E}(X_E, \lambda_E)=\| X_E - X \|_F^2 - \lambda_E [v' (X_EX_E'\circ J + I)v - \sigma_m^2]
	\end{align}
	. To simplify notation, introduce $v=\sigma w$. With the results from above, the gradient of $\mathcal{L}_E$ can now be written as
	\begin{align}
		\nabla_{X_E} \mathcal{L}_{E} = 2(X_E - X) - 2{\lambda}_E(vv'\circ J)X_E \qquad \text{and} \qquad \nabla_{{\lambda}_E} \mathcal{L}_{E}= \sigma_m^2 - v' (X_EX_E'\circ J + I)v 
	\end{align}
	. To minimize the Lagrangian function, both gradients are set to zero, $\nabla_{X_E} \mathcal{L}_{E} = \nabla_{\lambda_E} \mathcal{L}_E=0$. Therefore, from rearranging terms of $\nabla_{X_E} \mathcal{L}_{E}$ one can write
	\begin{align}
		\quad {X}_E = (I-{\lambda}_E vv' \circ J)^{-1}X
	\end{align}
	The inverse of $(I-\lambda_E vv' \circ J)^{-1}$ can be expressed following the expansion procedure of \cite{Miller1981}. Since in a real economy the elements of $vv' \circ J$ are generally smaller 1 and close to 0,\footnote{If this is not the case, implied volatilities can be time-scaled down e.g. from yearly to daily such that $\forall i: v_i < 1$.} higher order terms like $(vv'\circ J)^2$ converge towards zero very fast. Consequently, the inverse is efficiently approximated by
	\begin{align}
		(I-{\lambda}_E vv' \circ J)^{-1} \approx I+{\lambda}_E vv' \circ J\\
		\label{eq:xe}
		\implies P_E(X): X_E\approx X + {\lambda}_E (vv' \circ J) X
	\end{align}
	. Plugging back into the equality constraint $\nabla_{\lambda_E} \mathcal{L}_E=0$ and rearranging terms, ${\lambda}_E$ can then be solved from a simple quadratic equation having two solutions,
	\begin{align}
		{\lambda}_{E,\pm}= \frac{{\lambda}_{E,1} \pm \sqrt{{\lambda}_{E,1}^2 - 4 {\lambda}_{E,0}{\lambda}_{E,2}}}{2{\lambda}_{E,0}}, \qquad \text{with} \qquad \begin{cases}
			{\lambda}_{E,0}  = v'[(vv'\circ J)^2\circ J]v \\
			{\lambda}_{E,1} = 2v'[XX' (vv'\circ J)\circ J]v\\
			{\lambda}_{E,2}= v'[XX'\circ J + I]v - \sigma_m^2
		\end{cases}
	\end{align}
	. Having estimated both solutions of ${\lambda}_{E,\pm}$ and inserting the results into Eq.\ref{eq:xe}, it is easy to attain whether ${X}_{E,+}$ or ${X}_{E,\textrm{-}}$ is closer to $X$. Hence, it evolves that also the projection $P_{E}(X)$ is inexpensive to compute. In case ${\lambda}_E$ is complex, taking only its realistic part is sufficient given the inexact restoration framework.\\
	$\dot{X}$ that fulfills both $\dot{X}\in \Omega$ and $g(\dot{X})=0$ can now be found following a simple alternating projection algorithm. Hereby, start with the initial projection on the market-constraint $P_E(X)$. If $P_E(X)\in\Omega$ then $\dot{X}$ has already been found. Otherwise, lock-in whether $\lambda_{E,+}$ or $\lambda_{E,\textrm{-}}$ was used and take only the upper or lower solution subsequently. This is useful because thereby the market-constraint becomes convex and alternating now the projections $P_\Omega(\cdot)$ and $P_E(\cdot)$ converges towards the optimal solution $\dot{X}$. The alternating procedure can be stopped if a certain tolerance level for $g(\cdot)$ has been reached.
	
	With the defined projection procedure, the optimization under IR-SPGM can now be carried out following the guidelines of \cite{Gomes2009}.

	\subsection{Economic Approach}
	Alternatively to the above quantitative approach, one can also pre-specify expected factors $X$. The factor exposures can be computed either from a statistical (e.g., prinicpal component analysis) or from an economic routine (like CAPM, Fama-French \cite{FF3}). As for the latter, $X$ corresponds to the correlation between stocks and the risk factors. 
	Note, from literature on factor analysis it is known that if a $X$ can be found such that $XX'$ has unit diagonal, then a correlation matrix is fully explained by its reduced structure $X$. It is also known that if $k\!=\!n$, then a correlation matrix can be fully described by a set of factors (something we know e.g. from eigen-value decomposition). So, either way, a (\psd\ ) correlation matrix $C$ is an aggregate with $X$ as the underlying structure. In the models of \cite{Buss2012,Numpacharoen2013}, the correlation risk premium was incorporated on the aggregated level $C$; different to that, in the subsequent the premium will be incorporated on the underlying structure $X$.\\
	
	Consider the implied correlation matrix is described by a one-factor model, $k\!=\!1$.\footnote{Actually, in financial markets it is documented (e.g., \cite{...}) that the first principal component already explains a very large portion of the realized correlation matrix. It is also reported that the first principal component is very similar to CAPM's market portfolio, see e.g. \cite{...}.} Given so, the weighting idea of \cite{Buss2012} can be applied on the factor loading level to achieve the $\mathbb{P}-\mathbb{Q}$ transformation. As in \cite{Buss2012, Numpacharoen2013}, assume that physically expected correlations can be estimated. Next, following their regime, $X_\mathbb{P}$ will weighted against $\mathbf{1}_{n\times k}$ if $CRP>0$. In case $CRP<0$, then the weighting is made towards the lower boundary, which can now be chosen as $\textrm{-}\mathbf{1}_{n\times k}$. Hence, there is not an economically inconsistent-scaling as in \cite{Numpacharoen2013}, and there is also no \psd\ problem as in \cite{Buss2012}. Introduce the sign of the correlation risk premium as $\upsilon:=\text{sign}({CRP})$ to indicate up- or down-scaling, the transformation can then similarly to Eq.\ref{eq:CQ} be written as
	\begin{align}
		\label{eq:xq}
		X_\mathbb{Q} = X_\mathbb{P} + \tilde{\alpha} X_{\Delta},\qquad \text{with}\qquad  X_{\Delta} = \upsilon\mathbf{1}_{n\times k} - X_\mathbb{P}
	\end{align}
	. It generally holds that $\tilde{\alpha}\in[0,1]$ and the implied correlation matrix is given by Eq.\ref{eq:C} as $C=X_\mathbb{Q}X_\mathbb{Q}' \circ J + I$. Note that $C$ can now cover the entire range of mathematically feasible correlation matrices, as it is generated from Eq.\ref{eq:C}. To compute $\tilde{\alpha}$, one has to match the market constraint
	\begin{align}
		\sigma_m^2 = w' \sigma (X_\mathbb{Q}X_\mathbb{Q}'\circ J + I) \sigma w
	\end{align} 
	, from which it follows that $\tilde{\alpha}$ can be explicitly solved for by
	\begin{align}
		\label{eq:alpha}
		\tilde{\alpha} = \frac{\textrm{-}\sigma_{\mathbb{P},\Delta}^2 + \upsilon\sqrt{\sigma_{\mathbb{P},\Delta}^4 - \sigma_{\Delta}^2(\sigma_{\mathbb{P}}^2 - \sigma_m^2 )} }{\sigma_{\Delta}^2} \qquad \text{with} \qquad \begin{cases}
			\sigma_{\mathbb{P}}^2 &= w'\sigma(X_\mathbb{P}X_\mathbb{P}' \circ J + I) \sigma w\\
			\sigma_\Delta^2 &= w'\sigma(X_{\Delta}X_{\Delta}' \circ J ) \sigma w\\
			\sigma_{\mathbb{P}, \Delta}^2 &= w'\sigma(X_\mathbb{P}X_{\Delta}' \circ J) \sigma w
		\end{cases}
	\end{align}
	.\footnote{For multiple market constraints, one can either calibrate $X_\mathbb{P}$ and obtain a unique $\tilde{\alpha}$, or first solve for $\tilde{\alpha}$'s on sub-index levels, and then on the market index level.} The above quadratic equation obviously gives two values of $\tilde{\alpha}$, given that "$+\upsilon$" can be replaced by "$\pm$". Since $\tilde{\alpha}$ is associated to an economic interpretation, the upper value is taken in case of $CRP > 0$ (up-scaling), and the lower value if $CRP < 0$ (down-scaling), hence "$\pm$" is substituted by "$+\upsilon$" to implement this mechanism. 
	In a multi-factor pricing model (like Fama-French three factor \cite{FF3}), high multicolinearity among factors could potentially violate $X_\mathbb{P}\in \Omega$. This issue can be easily resolved by first computing the correlations between stocks to factors, and then orthogonalize the set of vectors (e.g., via Gram-Schmidt process).\\
	
	The procedure of computing the implied correlation matrix from pricing factors can now be summarized as follows. First, compute physically expected correlations between stocks to risk factors, denote this set of vectors as $X_\mathbb{P}$. If necessary, orthogonalize $X_\mathbb{P}$. Second, calculate the weighting scalar $\tilde{\alpha}$ according to Eq.\ref{eq:alpha}. Third, compute $X_\mathbb{Q}$ from Eq.\ref{eq:xq} and the implied correlation matrix then evolves as $C=X_\mathbb{Q}X_\mathbb{Q}' \circ J + I$.

	\section{Empirical Experiment}
	A brief empirical experiment on computation of the factor structured implied correlation matrix is drawn from data of the S\&P 100 and S\&P 500 index to evaluate implementation difficulty and computational efficiency. 
	The focus is on monthly at-the-money (ATM) Call option implied volatilities with a target maturity of one month, directly derived from OptionMetrics. Note that OptionMetrics follows a three-dimensional kernel regression for interpolating the option surface (see \cite{OptionMetrics}), a discussion on this topic can be found in \cite{Ulrich2020}. Computations are carried out at the beginning of each month, starting in 1996-01-01 to 2020-12-02 (i.e., the maximum of available data for this study), thus giving 300 implied correlation matrix estimates per time-series. The ATM level is chosen due to three reasons: first, options are typically most liquid around the ATM level (\cite{Etling2000}); second, the ATM level is less sensitive to model misspecification (compared to out-of-money; \cite{Carr2003}); third, ATM Call prices are close to ATM Puts. Return (daily) and market value (monthly) data are derived from CRSP (Center for Research in Security Prices), the lists of index constituents are from Compustat. All computations were executed under a Intel\textsuperscript{\textregistered} Core\texttrademark\ i5-8250U CPU with 1.60GHz, using the statistical programming software \textsf{R}. As for the optimization algorithms, a variance tolerance for the market constraint of 1e-06 is chosen. The stopping criteria is set to a marginal improvement in the objective function of 1e-03. The IR-SPGM is implemented as described by \cite{Gomes2009} (Algorithm 2.1), except that a monotone line search strategy is used.\footnote{The non-monotone line search strategy runs additional sub-routines of projections to speed-up convergence. But since the projection function here is a potential alternating procedure on its own, I found the monotone line search to be slightly faster than the non-monotone one.}
	Two types of target matrices are used. First, simple Pearson's correlation matrices from 12-month historically realized returns. Second, a mean-reverting matrix with the entries $\hat{\rho}(i,j) = \theta_{ij}\rho(i,j) + (1-\theta_{ij})\bar{\rho}(i,j)$, using 9 months of historically realized returns for $\rho(i,j)$ and the mean-correlation between $i$ and $j$ over the total time-horizon for $\bar{\rho}(i,j)$. The reversion speed $\theta_{ij}$ is randomly drawn from a uniform-distribution between 0 and 0.4 to bring in some noise into the target matrix. Generally, the two S\&P indices are rebalanced on a quarterly basis, hence there is no guarantee that the target matrices per se are \psd\ . In such cases, the factor method also served as a repairing tool. 
	
	As for the starting value $X^{(0)}$, I follow a modified approach as proposed by \cite{Borsdorf2010}. For the target matrix $A$, let $e$ be the set of eigenvectors and $\iota$ the corresponding eigenvalues. Then, for each column $d=\{1,...,k\}$ of $X^{(0)}$, compute the starting values as
	\begin{align}
		\forall d = \{1,...,k\}: X_d^{(0)} = \varsigma_d e_d \quad \text{where} \quad \varsigma_d=\min\left\{\sqrt{\frac{(\iota_d - 1)\|e_d\|_2^2}{k\|e_d\|_2^4 - k \sum_{i=1}^n e_{d,i}^4 }} ,  \frac{1}{\sqrt{k} \max_i |e_{d,i}|}\right\}
	\end{align}
	. This starting value is identical with the one proposed by \cite{Borsdorf2010} in case $k=1$, but differs when the number of factors is higher. Within the empirical experiments of the $k>1$ cases I observe that the reduction in the objective function is larger under the modified starting value, hence the proposed version of $X^{(0)}$ is preferred.\\
	
	
	The empirical experiment is now split into three parts, which are summarized in Tab.\ref{tab:res} Panel A, B and C. The columns of Tab.\ref{tab:res} describe the number of risk factors (\textsf{k}), the target matrix (\textsf{A}), the mean and standard deviation of computation time (\textsf{t}), the optimized objective function (\textsf{fn}), the absolute realization of the variance tolerance (\textsf{|v.tol|}) and the number of outer iterations (\textsf{iter}). The first two panels show results on the quantitative approach, computing the nearest implied correlation matrices. The third panel shows results on the economic approach.
	
	In the first part (Panel A), the nearest implied correlation matrix is computed using IR-SPGM for the S\&P 100 under different settings with respect to the number of risk factors and the two different target matrices. Overall, the patterns between the historical- and the mean-reverting matrices look very similar. I observe that the average computation speed is very fast and comparable to the un-market-constrained results of \cite{Borsdorf2010}. This observation is probably due to the fast convergence within few iterations - around 3 to 5 on average - and the simplicity of the projection. As it is unlikely that the target matrix perfectly equals the hidden true implied correlation matrix, it is not expected that the objective function will converge towards zero as the mark-up due to a correlation risk premium remains. At this point it is found that an increasing number of risk factors indeed reduces the final objective function, hence improves the estimation accuracy. On the other hand, non-surprisingly it is also observed that a larger $k$ comes at higher computational effort as it multiplies the model's number of variables. With a look on the variance tolerance, the IR-SPGM algorithm had no difficulties to stay inside the feasible region.
	
	The second part (Panel B) applies the nearest implied correlation matrix method on three alternative settings. On the first line, computation results on S\&P 500 data are shown, increasing the number of unknown correlation pairs from 4950 ($n\!=\!100$) to 124'750. With the increase of number of stocks the computation time increased over-proportional, which is probably due to the larger average number of iterations needed to converge towards the optimum (3.037 vs. 5.003). The optimized objective function can be compared when dividing by $2n_\rho$, which is 0.016 for the S\&P 100 and 0.011 for the S\&P 500 index. With a look on the variance tolerance, the S\&P 500 computations did stick more strictly to the market constraint with a maximum deviation of 1.3e-10. Therefore, while the computation time over-proportionally increased from the S\&P 100 to the 500 index, the mean \textsf{fn} per matrix-entry and also the realized variance tolerance were remarkably smaller. Hence, stopping criteria and variance tolerances can be potentially relaxed the greater the index. The second line of Panel B reports results on the S\&P 100 where the target matrix is chosen from mean-reverting correlations that are converted into an implied correlation matrix according to the model of \cite{Buss2012} (Eq.\ref{eq:CQ}). Hence, the target matrix here already is an implied correlation matrix that matches the market constraint, but does not stick to mathematical feasibility. At this setting, only the monthly estimates where the target matrix was not \psd\ were taken, such that the factor method is used to repair the estimates under the \cite{Buss2012}-model. This included 64 of the 300 monthly matrices. The number of risk factors is set higher here to achieve a better fit, and with a look on \textsf{fn} one observes that the objective function is indeed substantially smaller at this application. Hence the factor model qualifies as a repair tool for existing implied correlation models. The third line of Panel B replaces the IR-SPGM algorithm by a sequential quadratic programming method (SQP) for the S\&P 100 data. The algorithm used is borrowed from the \textsf{Rsolnp} package (see \cite{Ghalanos2015,Ye1989}). Using the SQP solver serves as a reference to cross-validate whether the IR-SPGM method was correctly implemented, the results are thus directly compared to the first line of Panel A. What I observe is that the optimized objective functions are very similar between the SQP and IR-SPGM algorithm, hence I conclude that the IR-SPGM algorithm was correctly implemented. Comparing computation times between them, SQP seems to be not competitive, which was also found on the un-market-constrained case of \cite{Borsdorf2010}. This finding thus motivates the usage of IR-SPGM.
	
	Panel C of Tab.\ref{tab:res} implements the economic approach on the S\&P 500 index with respect to CAPM (market factor), the Fama-French three factor model (\cite{FF3}), the extension for the momentum factor (\textsf{FF3+Mom.}) and the Fama-French five factor model (\cite{FF5}). Return data on the factor portfolios is derived directly from K.R. French's data library.\footnote{\url{http://mba.tuck.dartmouth.edu/pages/faculty/ken.french/data_library.html}} To approximate $X_\mathbb{P}$, 12-month realized correlations between stocks and the risk factors are used. Next, to ensure that $X_\mathbb{P}\in\Omega$, the factors are orthogonalized via Gram-Schmidt process.\footnote{In the following order: Market, SMB, HML, RMW, CMA, Momentum.} Since the economic approach does not iterate and consits only of one projection, computation turns out to be very inexpensive. On the other hand, one also recognizes that it deviates more strongly from the target matrix than the quantitative approach does, but still the larger the number of risk factors the smaller \textsf{fn.mean}. Similarity to the target matrix, however, plays an subordinate role here. A more interesting pattern evolves with an eye on $\hat{\alpha}$. Recap that $\hat{\alpha}$ represents the weight of the boundary and $(1-\hat{\alpha})$ the weight of $X_\mathbb{P}$ inside the risk-neutral factor correlations $X_\mathbb{Q}$. When comparing $\hat{\alpha}$ now among the different pricing models, one observes that the average value declines. This means that the larger the number of risk factors, the less is the modification to match the observed implied market variance, hence a larger $k$ also explains more of the hidden (true) implied correlation matrix. Generally, as $\hat{\alpha}$\textsf{.mean} is close to zero across all four economic models, I conclude that correlation risk premia enter modestly in the factor-structured implied correlation matrix framework.
	
	\begin{table}[h]
		\centering
		\caption{Summary statistics of computing nearest (quantitative approach, Panel A-B) and pricing-factor structured (economic approach, Panel C) implied correlation matrices on monthly option data. As for the quantitative approach, it turns out that computations are carried out in a small amount of time, converging towards the optimal solution within few iterations. The nearest factor-structured matrix can thus be used either as a stand-alone estimate, or as a tool to repair a non-positive-semi-definite implied correlation matrix. Introducing economic assumptions, also pricing factors from models like CAPM or Fama-French can be used to estimate the implied correlation matrix.}
		\footnotesize
		\begin{tabular}{lccccccccccc}
			\toprule
			& k     & A     & t.mean & t.sd  & fn.mean & fn.sd & |v.tol|.mean & |v.tol|.max & iter.mean & iter.sd & index\\
			\midrule
			\textbf{Panel A:} &       &       & \multicolumn{9}{c}{\textit{hist. matrix}} \\
			SP100 & 1     & hist. & 0.051 & 0.024 & 159.9 & 172.8 & 5.1E-11 & 1.5E-08 & 3.037 & 1.491 & SP100 \\
			SP100 & 3     & hist. & 0.106 & 0.050 & 111.0 & 168.3 & 1.3E-08 & 6.2E-07 & 4.819 & 2.137 & SP100\\
			SP100 & 5     & hist. & 0.140 & 0.077 & 102.7 & 166.9 & 1.9E-08 & 9.9E-07 & 4.990 & 1.977 & SP100\\
			&       &       & \multicolumn{9}{c}{\textit{mean-reverting matrix}}\\
			SP100 & 1     & m.r.  & 0.055 & 0.027 & 158.0 & 158.2 & 2.0E-11 & 2.5E-09 & 3.064 & 1.438 & SP100\\
			SP100 & 3     & m.r.  & 0.107 & 0.047 & 113.3 & 173.2 & 3.9E-08 & 9.6E-07 & 4.708 & 1.790 & SP100 \\
			SP100 & 5     & m.r.  & 0.149 & 0.082 & 104.6 & 172.5 & 2.4E-08 & 8.9E-07 & 5.054 & 1.989 & SP100\\
			\midrule
			\textbf{Panel B: } &       &       &       &       &       &       &       &       &       & & \\
			SP500 & 1     & hist. & 5.347 & 2.311 & 2847.9 & 4395.0 & 1.5E-11 & 1.3E-10 & 5.003 & 1.892 & SP500\\
			Repaired $C_\mathbb{Q}$ & 15    & $\text{m.r.}_\mathbb{Q}$  & 0.223 & 0.070 & 16.41 & 6.775 & 3.1E-08 & 9.7E-07 & 7.184 & 2.351 & SP100\\
			SQP   & 1     & hist. & 0.337 & 0.098 & 161.4& 166.8 & 3.5E-07 & 9.6E-07 & 3.517 & 0.721 & SP100\\
			\midrule
			\textbf{Panel C:} &       &       &       &       &       &       &       &       & $\hat{\alpha}$.mean & $\hat{\alpha}$.sd & \\
			\cmidrule{10-11}    
			CAPM  & 1     & hist. & 0.017 & 0.018 & 5108.5 & 5858.9 & 3.0E-17 & 4.7E-16 & 0.138 & 0.133 & SP500\\
			Fama-Fr.3 & 3     & hist. & 0.019 & 0.018 & 4376.6 & 4740.8 & 3.3E-17 & 2.5E-16 & 0.091 & 0.070& SP500 \\
			FF3+Mom. & 4     & hist. & 0.019 & 0.019 & 4167.1 & 4381.3 & 3.3E-17 & 3.1E-16 & 0.082 & 0.060& SP500 \\
			Fama-Fr.5 & 5     & hist. & 0.019 & 0.024 & 4125.3 & 4509.1 & 3.6E-17 & 3.9E-16 & 0.076 & 0.055& SP500 \\
			
			\bottomrule
		\end{tabular}%
		\label{tab:res}%
	\end{table}%

	\section{Concluding Remarks}
	Having an idea about future diversification possibilities requires knowledge about future correlations. Identifying such is a challenging task as backward-looking time series will never capture information on events that will happen in the future. On the other hand, option implied volatilities are known to carry information on the expected development of a market, hence are used by academics and practitioners to bring in forward-looking perspectives. Computing a feasible implied correlation matrix, however, remains puzzling as it is highly under-determined, allowing for many possible solutions. This paper discusses mathematical and economical necessary conditions for feasible solutions to this problem and finds, that existing models are typically unable to guarantee so. As correlation matrices can be represented in terms of factors in general, this work rolls-up a solution from the underlying factor structure, which turns out to be an easy way of handling the mathematical requirements next to matching observable market-constraints. I conclude that the quantitative approach of computing the nearest factor-structured implied correlation matrix is a useful tool for repairing non-positive-semi-definite (implied) correlation matrices that can be also used as a stand-alone estimate, coming at a minimum of assumptions. With the economic approach it is demonstrated how expected correlations - or, potentially factor-betas - between stocks and the market portfolio can be used to translate into an estimate of the implied correlation matrix. Both approaches are empirically evaluated on monthly S\&P 100 and S\&P 500 option data (1996-2020), using a inexact-restoration spectral projected gradient method for the quantitative, and Fama-French factors for the economic approach. Since implementation and computation of the factor-framework turns out to be straight-forward, potential applications of the method are in the context of basket option pricing, estimating forward-looking betas with the corresponding implications of risk-/portfolio-management and equity valuation, or forecasting stock-market movements.

	
	
	\newpage
	\begingroup
	\raggedright
	\bibliographystyle{ieeetr}

	\addcontentsline{toc}{section}{References}
	\endgroup

	\section*{Appendix}
	\subsection*{Non-Gaussian Copula: Example of Variance-Gamma}
	Risk-neutral densities are typically of asymmetric shape and heavy tails, causing stocks to correlate higher for market down-turns and less for up-movements. The Pearson's correlation matrix cannot capture such asymmetries. While for at-the-money options there is almost no difference in implied correlations between the Gaussian or a more sophisticated copula (see e.g., \cite{Linders2016})\footnote{ATM option prices are typically very similar among most option pricing models, hence the same holds for implied volatilities and thus also for implied correlations}, it can indeed matter for out-of-money options. An easy extension for non-normal shapes/tails can be made if the transformation between direct and centered multivariate parametrization is known (following e.g., \cite{Azzalini2008}). To provide an example, the case of the variance gamma model (\cite{Madan1990, Madan1998}) is discussed below, which became quite popular for pricing basket options (\cite{Linders2016}).\\
	
	Let $Z(t)$ follow a one-factor multivariate variance-gamma process, constructed from subordinating the multivariate Brownian motion $B$ with the Gamma distributed subordinator $V(t)\sim Ga(t/\nu, 1/\nu)$,
	\begin{align}
		Z(t) = t {\mu} + {\theta}V(t) + B(V(t)) \qquad s.t. \qquad Z \sim \mathcal{VG}(\xi, \omega, C_{dir}, \theta, \nu)
	\end{align}
	with ${\xi}\in\mathbb{R}^n$ as the location, $\omega\in\mathbb{R}_+^n$ the diagonal matrix of scale , ${\theta}\in\mathbb{R}^n$ the shape, $\nu\in\mathbb{R}_+$ the variance rate and $C_{dir}$ the correlation matrix; all being direct parameters. The subscripts $dir$ and $cen$ are used for direct/centered matrix parametrization. The direct covariance-matrix is thus given by $\Sigma_{dir}=\omega C_{dir} \omega$. All direct parameters except $C_{dir}$ can be derived for the cross-section and the (sub-)index from option data, for example from FFT \cite{Madan1998}. Further, we know that the first two centered moments evolve as
	\begin{align}
		E[Z] = {\xi + {\theta}} \qquad \text{and} \qquad E[(Z - E[Z])^2] = {\Sigma}_{dir} + \nu {\theta}{\theta}' 
	\end{align}
	(cp. \cite{Madan1990}). Hence, within the VG model, the change between direct and centered parametrization is easily obtained by $\Sigma_{cen} = \Sigma_{dir} + \nu {\theta}{\theta}' \equiv \sigma C_{cen} \sigma$, with $\sigma$ still being the diagonal matrix of centered volatilities. The centered Pearson's correlation matrix $C_{cen}$ is thus given by 
	\begin{align}
		\label{eq:cen}
		C_{cen} = \sigma^{-1}\omega C_{dir} \omega \sigma^{-1} + \nu \sigma^{-1} \theta\theta'\sigma^{-1}
	\end{align}
	
	As before, it still holds that the portfolio variance is computed from $\sigma_m^2 = w'\Sigma_{cen}w$, which defines the market constraint. So substituting Eq.\ref{eq:cen} back into the market constraint and introducing $C_{dir}(X) = J \circ XX' + I$ allows to estimate the direct correlation matrix (i.e., non-Pearson) via the nearest implied correlation matrix (NICM) method. This fact demonstrates that the NICM method is not limited to Pearson's correlation matrices. Worth to mention, $C_{dir}(X)$ is \textit{psd} by construction, the same holds for $C_{cen}$ as $\nu\theta\theta'$ is \textit{psd} and $\sigma$ and $\omega$ are diagonal of positive entries. Hence $C_{dir}(X) = J \circ XX' + I$ ensures mathematical feasibility.
\end{document}